\documentclass[11pt]{article}

\usepackage[utf8]{inputenc}
\usepackage[T1]{fontenc}
\usepackage{lmodern}
\usepackage[a4paper,margin=1in]{geometry}
\usepackage{amsmath,amssymb,amsfonts,amsthm,mathtools,bm}
\usepackage{enumitem}
\usepackage[hidelinks]{hyperref}
\usepackage{graphicx}
\usepackage{xcolor}
\usepackage{mathrsfs}

\hypersetup{
  colorlinks=true,
  linkcolor=blue,
  citecolor=blue,
  urlcolor=blue
}

\newtheorem{theorem}{Theorem}[section]
\newtheorem{proposition}[theorem]{Proposition}
\newtheorem{lemma}[theorem]{Lemma}

\theoremstyle{definition}
\newtheorem{definition}[theorem]{Definition}
\newtheorem{remark}[theorem]{Remark}
\newtheorem{assumption}[theorem]{Assumption}

\DeclareMathOperator{\argmax}{arg\,max}

\DeclareMathOperator{\softmax}{softmax}
\DeclareMathOperator{\Proj}{Proj}

\title{\textbf{A General Equilibrium Theory of Orchestrated AI Agent Systems}\\
\large Version 2.1 (Corrected Price Simplex and T\^atonnement Normalization)}

\author{
Jean-Philippe Garnier\\
BrainiaK\\
\texttt{jeanphi.garnier@brainiak.tech}
}

\date{March 2026}

\begin{document}

\maketitle

\begin{abstract}
We establish a general equilibrium theory for systems of large language model
(LLM) agents operating under centralized orchestration. The framework is a
production economy in the sense of Arrow--Debreu, extended to an
infinite-dimensional commodity space following Bewley. Each LLM agent is modeled
as a firm whose production set is a subset of the Hilbert space
$H=L^2([0,T],\mathbb{R}^R)$ of metric trajectories. The orchestrator is the
consumer, choosing a routing policy over the agent DAG to maximize system
welfare under a budget constraint evaluated at functional prices.

This corrected version makes one point explicit: in the finite-dimensional
projected economy, the admissible price set is the positive simplex
\[
\Delta_p^{AK-1}=\left\{p\in\mathbb{R}_+^{AK}:\sum_{a=1}^A\sum_{k=1}^K p_{a,k}=1\right\},
\]
and the price update in the t\^atonnement is the Euclidean projection onto that
simplex after positive truncation. It is \emph{not} an $L^2$ renormalization
onto a unit sphere. This clarification preserves the compact-convex geometry
required by Brouwer's theorem and aligns the numerical implementation with the
economic interpretation of prices as nonnegative shadow values.

We prove existence of equilibrium, a functional Walras law, Pareto optimality,
decentralizability of Pareto optima, and uniqueness with geometric convergence
under a contraction condition. The orchestration dynamics define a Walrasian
t\^atonnement on a compact convex state space. The framework also admits a
DSGE interpretation in which SLO parameters act as policy rates.
\end{abstract}

\noindent \textbf{Keywords:}
general equilibrium, production economy, orchestration, large language models,
functional prices, Hilbert space, Walras law, Pareto optimality, Bewley, DSGE.

\noindent \textbf{MSC 2020:}
91B50, 47H10, 91B02, 68T05.

\noindent \textbf{JEL:}
C62, C65, D51, D61.

\tableofcontents

\section{Introduction}

\subsection{The problem}

Modern AI deployments increasingly behave as organizations: collections of
specialized agents---language models, retrieval modules, classifiers,
validators, planners---composed in directed acyclic graphs and coordinated by a
central orchestrator. At each stage, the orchestrator selects which agent to
invoke, in what order, and with what priority, subject to quality, latency,
cost, and load constraints.

This architecture is common in practice but under-theorized. Heuristic routing
rules, hand-tuned weights, and ad hoc score aggregation do not answer the
fundamental questions:

\begin{itemize}[leftmargin=2em]
\item Does an optimal orchestration exist?
\item Is it stable?
\item Is it unique?
\item Do the coordination dynamics converge?
\item Can the allocation be interpreted as an equilibrium?
\end{itemize}

This paper gives a mathematical answer by treating orchestrated AI agent systems
as production economies.

\subsection{The approach: an Arrow--Debreu--Bewley production economy}

The central observation is structural: an orchestrated AI system can be modeled
as a production economy. The commodity space is not finite-dimensional but the
Hilbert space
\[
H=L^2([0,T],\mathbb{R}^R),
\]
whose elements are metric trajectories over a finite horizon. Each agent is a
firm with a production set $Y_a\subset H$, determined by its frozen parameters
and execution constraints. The orchestrator is the consumer, selecting a routing
policy to maximize welfare subject to a budget constraint evaluated at
functional prices.

The finite-dimensional approximation is obtained through an SFSL map
\[
p_m:H\to\mathbb{R}^K,
\]
which extracts $K$ summary statistics from a metric trajectory. This produces a
sequence of projected economies in finite-dimensional spaces, where existence
can be proved by Brouwer. The limit economy in $H$ then follows by a Bewley-type
argument.

\subsection{Contributions}

This paper makes six contributions.

\begin{enumerate}[leftmargin=2em]
\item It models an orchestrated AI system as an Arrow--Debreu production economy
in a Hilbert commodity space.
\item It proves a \emph{functional Walras law}
\[
\sum_{a=1}^A \langle p_a,z_a(p)\rangle_H=0,
\]
where $z_a(p)$ is agent $a$'s functional excess demand.
\item It proves existence of equilibrium via finite-dimensional approximation and
Brouwer's theorem.
\item It establishes the two welfare theorems.
\item It proves uniqueness and geometric convergence under a contraction
condition on the orchestration operator.
\item It clarifies that the price space in the projected economy is a
\emph{simplex}, and that the corrected t\^atonnement uses simplex projection
rather than spherical normalization.
\end{enumerate}

\subsection{Relation to existing work}

The existence proof follows the tradition of Arrow--Debreu in finite dimensions
and Bewley for infinite-dimensional commodity spaces. The welfare theorems are
classical in general equilibrium theory. The present contribution is to port
that architecture into orchestrated AI systems and to identify the correct
geometric structure of the projected price space.

\section{The economy of orchestrated agents}

\subsection{Commodity space and price space}

Let $R\ge 1$ denote the number of metric dimensions and $T>0$ the observation
horizon. The commodity space per agent is the separable Hilbert space
\[
H=L^2([0,T],\mathbb{R}^R),
\]
with inner product
\[
\langle f,g\rangle_H=\sum_{r=1}^R\int_0^T f_r(t)g_r(t)\,dt.
\]

Let $A\ge 1$ be the number of agents. The full commodity space is
\[
X=H^A=\prod_{a=1}^A H,
\]
with inner product
\[
\langle x,y\rangle_X=\sum_{a=1}^A \langle x_a,y_a\rangle_H.
\]

By the Riesz representation theorem, $H^\ast\simeq H$. Therefore the price
space is also $X$. A price vector is
\[
p=(p_1,\dots,p_A)\in X,
\]
where $p_a(t)\in \mathbb{R}^R$ is the instantaneous shadow-value vector for
agent $a$ at time $t$.

The value of a bundle $x\in X$ at prices $p\in X$ is
\[
\langle p,x\rangle_X=\sum_{a=1}^A \langle p_a,x_a\rangle_H.
\]

\subsection{Firms: LLM agents as producers}

Each agent $a\in\{1,\dots,A\}$ is a firm with frozen technological endowment
$\theta_a$.

\begin{definition}[Production set]
The production set of agent $a$ is a set $Y_a\subset H$ representing all metric
trajectories feasible under technology $\theta_a$.
\end{definition}

\begin{assumption}[Production regularity]
For each agent $a$:
\begin{enumerate}[label=(Y\arabic*),leftmargin=2em]
\item $Y_a$ is nonempty, closed, and convex.
\item $Y_a$ is bounded: there exists $R_a>0$ such that $Y_a\subset B_H(0,R_a)$.
\item $0\in Y_a$.
\end{enumerate}
\end{assumption}

Given prices $p_a\in H$, the profit of firm $a$ is
\[
\pi_a(p)=\sup_{y\in Y_a}\langle p_a,y\rangle_H,
\]
and the supply correspondence is
\[
\eta_a(p)=\argmax_{y\in Y_a}\langle p_a,y\rangle_H.
\]

\subsection{The consumer: the orchestrating firm}

The orchestrator is the sole consumer. It selects a routing policy over a DAG
$G=(V,E)$ with $|V|=A$ agents. Let $\mathcal{P}(G)$ be the finite set of
admissible paths through the DAG.

\begin{definition}[Routing policy]
A routing policy is a probability vector
\[
\alpha\in\Delta^{|\mathcal{P}(G)|-1}
\]
over the set of admissible DAG paths.
\end{definition}

The routing policy induces an expected demand trajectory $d_a(\alpha)\in H$ for
each agent $a$. The orchestrator maximizes a welfare function $W(\alpha)$, which
encodes quality, latency, cost, and regularization objectives.

\begin{assumption}[Consumer regularity]
The welfare function $W$ is continuous, strictly quasi-concave on the routing
simplex, and locally non-satiated.
\end{assumption}

The budget constraint is
\[
\sum_{a=1}^A \langle p_a,d_a(\alpha)\rangle_H
\le
\sum_{a=1}^A \pi_a(p).
\]

Under local non-satiation, the budget binds at optimum.

\subsection{Equilibrium}

\begin{definition}[Orchestrated general equilibrium]
An orchestrated general equilibrium is a triple
\[
(p,y,\alpha)\in X\times \prod_{a=1}^A Y_a\times \Delta^{|\mathcal{P}(G)|-1}
\]
such that:
\begin{enumerate}[label=(E\arabic*),leftmargin=2em]
\item For each $a$, $y_a\in\eta_a(p)$.
\item $\alpha$ maximizes $W$ over the routing simplex subject to the budget
constraint.
\item Markets clear:
\[
d_a(\alpha)=y_a,\qquad a=1,\dots,A.
\]
\end{enumerate}
\end{definition}

\section{Functional Walras law}

Define the functional excess demand of agent $a$ by
\[
z_a(p)=d_a(\alpha(p))-y_a(p)\in H,
\]
and the aggregate excess demand by
\[
z(p)=(z_1(p),\dots,z_A(p))\in X.
\]

\begin{theorem}[Functional Walras law]
Under the production and consumer regularity assumptions, for every admissible
price vector $p\in X$,
\[
\sum_{a=1}^A \langle p_a,z_a(p)\rangle_H=0.
\]
Equivalently,
\[
\langle p,z(p)\rangle_X=0.
\]
\end{theorem}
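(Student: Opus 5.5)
The plan is the classical Walras-law argument: the budget binds, and supply attains profit. I will expand $\langle p,z(p)\rangle_X$ by linearity and show that both pieces equal total profit.

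Fix an admissible price vector $p\in X$. Let $\alpha(p)$ be the orchestrator's optimal routing, and let $y_a(p)\in\eta_a(p)$ be chosen supplies.

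\emph{Step 1 (supply side).} Each $Y_a$ is nonempty, closed, convex and bounded by (Y1)--(Y2), hence weakly compact in the Hilbert space $H$. The map $y\mapsto\langle p_a,y\rangle_H$ is weakly continuous. So the supremum defining $\pi_a(p)$ is attained, $\eta_a(p)\neq\emptyset$, and
\[
\langle p_a,y_a(p)\rangle_H=\pi_a(p)
\]
for each $a$. Summing over $a$ gives $\sum_a\langle p_a,y_a(p)\rangle_H=\sum_a\pi_a(p)$. Note also $\pi_a(p)\ge 0$ because $0\in Y_a$ by (Y3), so the budget set contains a feasible point whenever some routing has nonpositive cost.

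\emph{Step 2 (demand side).} I need $\alpha(p)$ to exist. This holds because $W$ is continuous and the budget set is a closed subset of the compact simplex, provided $\alpha\mapsto d_a(\alpha)$ is continuous. I will take that continuity as implicit in the model, since $d_a$ is an expectation over paths and hence affine in $\alpha$. Local non-satiation then gives that the budget binds at the optimum, as stated just before the equilibrium definition:
\[
\sum_a\langle p_a,d_a(\alpha(p))\rangle_H=\sum_a\pi_a(p).
\]
I will make this standard argument explicit. Suppose the inequality were strict. By continuity of $\alpha\mapsto\sum_a\langle p_a,d_a(\alpha)\rangle_H$, a neighbourhood of $\alpha(p)$ in the simplex would remain affordable. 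Local non-satiation supplies a strictly better point in that neighbourhood, contradicting optimality.

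\emph{Step 3.} Subtract the identity of Step 1 from that of Step 2 and use linearity of the inner product:
\[
\sum_a\langle p_a,z_a(p)\rangle_H=\sum_a\langle p_a,d_a(\alpha(p))-y_a(p)\rangle_H=0.
\]
The equivalent form $\langle p,z(p)\rangle_X=0$ is just the definition of $\langle\cdot,\cdot\rangle_X$.

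The main obstacle is Step 2. "Admissible" prices must be read so that the budget set is nonempty and the demand map is well defined. Local non-satiation also has to be interpreted on the routing simplex, relative to the budget set, rather than in $H$. I would handle this by defining admissible $p$ as those for which the consumer's problem has a solution. This covers, for instance, the projected simplex prices used later. I would then note that the identity holds for any selection from $\eta_a(p)$ and from the demand correspondence.
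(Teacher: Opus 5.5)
Your proposal is correct and follows the paper's proof. The budget binds by local non-satiation, $\pi_a(p)=\langle p_a,y_a(p)\rangle_H$ by profit maximization, and subtracting gives the identity; your Steps 1--2 add the details the paper leaves implicit, namely attainment of profit via weak compactness of $Y_a$, existence of $\alpha(p)$ on a compact budget set, and the explicit neighbourhood argument for binding. One caution, which applies equally to the paper: a continuous $W$ on the compact routing simplex cannot be locally non-satiated at its maximizer, so Step 2 really needs ``admissible'' to mean that $\alpha(p)$ is not that satiation point (e.g.\ the unconstrained maximizer of $W$ is not affordable with slack at $p$, since otherwise $\langle p,z(p)\rangle_X<0$), not merely that the consumer problem is solvable.
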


\begin{proof}
By local non-satiation, the consumer budget constraint binds:
\[
\sum_{a=1}^A \langle p_a,d_a(\alpha)\rangle_H
=
\sum_{a=1}^A \pi_a(p).
\]
By definition of profit,
\[
\pi_a(p)=\langle p_a,y_a(p)\rangle_H.
\]
Subtracting the two identities yields
\[
\sum_{a=1}^A \langle p_a,d_a(\alpha)-y_a(p)\rangle_H=0,
\]
which is the claim.
\end{proof}

\begin{remark}
The functional Walras law is not imposed as an axiom; it is a theorem produced
by budget saturation and profit maximization.
\end{remark}

\section{Existence of equilibrium}

\subsection{The SFSL approximation from $H$ to $V_K$}

To work in finite dimension, define a bounded linear SFSL operator
\[
p_m:H\to \mathbb{R}^K
\]
that extracts $K$ summary statistics from a trajectory. Let
\[
p_m^\dagger:\mathbb{R}^K\to H
\]
be a reconstruction operator, and define the approximation subspace
\[
V_K=\Im(p_m^\dagger)\subset H,
\qquad \dim V_K=K.
\]

\begin{assumption}[SFSL completeness]
The family $(V_K)_{K\ge 1}$ is nested and satisfies
\[
\overline{\bigcup_{K\ge 1} V_K}=H.
\]
\end{assumption}

\subsection{The projected economy $E_K$}

Fix $K$. Let $Y_a^K=\Proj_{V_K}(Y_a)\subset V_K$. Since projection preserves
nonemptiness, compactness, and convexity in finite dimension, each $Y_a^K$ is a
nonempty compact convex subset of $V_K$.

Choosing an orthonormal basis of $V_K$, we identify $V_K$ with $\mathbb{R}^K$.
Thus the projected production vector is
\[
y=(y_1,\dots,y_A)\in \prod_{a=1}^A Y_a^K\subset \mathbb{R}^{AK}.
\]

\subsection{State space and compactness}

The crucial point of this corrected version is the geometry of the projected
price space.

\begin{definition}[Projected price simplex]
The projected price space is
\[
\Delta_p^{AK-1}
=
\left\{
p\in\mathbb{R}_+^{AK}:
\sum_{a=1}^A\sum_{k=1}^K p_{a,k}=1
\right\}.
\]
\end{definition}

\begin{remark}
The admissible price set is a positive simplex, not a Euclidean sphere. The
economic meaning is that projected prices are nonnegative normalized shadow
weights over the $AK$ projected commodities.
\end{remark}

\begin{remark}[Why $L^2$ normalization is inadmissible]
Replacing the simplex $\Delta_p^{AK-1}$ by the unit sphere
$S^{AK-1} = \{p : \|p\|_2 = 1\}$ breaks the proof at three points:
\begin{enumerate}[nosep,label=(\roman*)]
\item \textbf{Convexity.} $S^{AK-1}$ is not convex; Brouwer's theorem requires
  a convex domain.
\item \textbf{Nonnegativity.} On the sphere, price components can be negative,
  violating the interpretation of prices as nonnegative shadow values.
\item \textbf{Heavy tails.} The ratio $p_i/p_j$ of a uniform unit vector on
  $S^{K-1}$ has Cauchy tails (Student-$t$ with 1 degree of freedom), rendering
  the Walras identity numerically unstable in finite precision.
\end{enumerate}
The simplex $\Delta_p^{AK-1}$ avoids all three issues.
\end{remark}

Let the routing simplex be denoted by
\[
\Delta_\alpha^{|\mathcal P(G)|-1}.
\]

Define the full projected state space
\[
\mathcal K
=
\left(\prod_{a=1}^A Y_a^K\right)\times \Delta_p^{AK-1}\times
\Delta_\alpha^{|\mathcal P(G)|-1}.
\]

\begin{lemma}
The set $\mathcal K$ is nonempty, compact, and convex.
\end{lemma}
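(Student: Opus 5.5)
The plan is to prove the three properties factor by factor and then transfer them to the product $\mathcal K$. A finite product of sets is nonempty, compact, and convex as soon as each factor is. Compactness follows from Tychonoff, or more simply from the fact that a finite product of closed bounded subsets of Euclidean spaces is closed and bounded in $\mathbb{R}^{AK}\times\mathbb{R}^{AK}\times\mathbb{R}^{|\mathcal P(G)|}$, together with Heine--Borel. Convexity holds because a convex combination in the product is computed coordinatewise.

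The two simplices are routine. $\Delta_p^{AK-1}$ is nonempty, since it contains the uniform vector $p_{a,k}=1/(AK)$. It is the intersection of the closed convex cone $\mathbb{R}_+^{AK}$ with the affine hyperplane $\{\sum p_{a,k}=1\}$, hence closed and convex. It is bounded, since every component lies in $[0,1]$. The same argument applies to $\Delta_\alpha^{|\mathcal P(G)|-1}$, provided $\mathcal P(G)\neq\emptyset$. This holds because $A\ge 1$: each single vertex, or at least some source-to-sink path, is an admissible path. I will state this explicitly as part of the standing setup.

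The main step is the factor $Y_a^K=\Proj_{V_K}(Y_a)$. I will use the orthogonal projection $P_K$ onto the closed, finite-dimensional subspace $V_K$. It is linear and satisfies $\|P_K\|\le 1$. Nonemptiness follows from $0\in Y_a$ (Y3), which gives $0\in Y_a^K$. Convexity follows because a linear image of the convex set $Y_a$ (Y1) is convex. Boundedness follows because $\|P_K y\|\le\|y\|\le R_a$ by (Y2).

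The obstacle is closedness, since the text's assertion that projection preserves compactness is not immediate: $Y_a$ is not compact in infinite-dimensional $H$. I will argue through weak compactness instead. Because $Y_a$ is closed, convex, and bounded in a Hilbert space, Mazur's theorem makes it weakly closed, and Banach--Alaoglu with reflexivity then makes it weakly compact. The map $P_K$ is norm-continuous and linear, hence weak-to-weak continuous, so $P_K(Y_a)$ is weakly compact in $V_K$. In finite dimensions the weak and norm topologies coincide, so $Y_a^K$ is norm-compact. Under the identification $V_K\cong\mathbb{R}^K$ by an orthonormal basis, which is a linear isometry, $Y_a^K$ is therefore a nonempty compact convex subset of $\mathbb{R}^K$.

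Combining the three factors through the product argument gives the lemma.
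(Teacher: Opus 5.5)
Your proof is correct and follows the same route as the paper: show that each factor ($Y_a^K$, $\Delta_p^{AK-1}$, $\Delta_\alpha^{|\mathcal P(G)|-1}$) is nonempty, compact, and convex, then pass to the finite product. You go beyond the paper at closedness of $Y_a^K$: the paper only asserts that ``projection preserves compactness in finite dimension'' even though $Y_a$ need not be compact in $H$, and your argument (Mazur, weak compactness of closed bounded convex sets in $H$, and weak continuity of the finite-rank projection) supplies the missing justification, which genuinely depends on the convexity of $Y_a$.
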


\begin{proof}
Each $Y_a^K$ is compact and convex in finite dimension. The price simplex
$\Delta_p^{AK-1}$ is compact and convex. The routing simplex is compact and
convex. The Cartesian product of finitely many compact convex sets is compact
and convex.
\end{proof}

\subsection{The corrected equilibrium map}

Define three update operators.

\paragraph{Production update.}
For $y\in \prod_a Y_a^K$, let $d(y,p,\alpha)$ be projected demand. Then
\begin{equation}
K_{1}(y,p,\alpha)
=
\Proj_{Y^K}\!\Big(y+\rho\big(d(y,p,\alpha)-y\big)\Big),
\qquad 0<\rho<1.
\label{eq:prod-update}
\end{equation}

\paragraph{Price update.}
Let $z(y,p,\alpha)=d(y,p,\alpha)-y$ denote projected excess demand. Define
componentwise positive truncation by $[v]_+=\max(v,0)$. Then the corrected price
update is
\begin{equation}
K_{2}(y,p,\alpha)
=
\Proj_{\Delta_p^{AK-1}}\!\big([\,p+\eta z(y,p,\alpha)\,]_+\big),
\qquad \eta>0.
\label{eq:price-update}
\end{equation}

\begin{remark}[Critical correction]
In \eqref{eq:price-update}, the operator $\Proj_{\Delta_p^{AK-1}}$ is the
Euclidean projection onto the simplex. It is \emph{not} an $L^2$
renormalization $p\mapsto p/\|p\|_2$. The latter would move prices onto a sphere,
destroy convexity of the admissible price set, and break the compact-convex
structure required by Brouwer's theorem.
\end{remark}

\paragraph{Routing update.}
Let
\[
s_a(p,y)=\langle p_a,y_a\rangle,
\qquad a=1,\dots,A,
\]
collect revenues/scores into a vector $s(p,y)\in \mathbb{R}^A$. Let $V$ be the
Bellman-DP value operator on DAG paths. Then
\begin{equation}
K_3(y,p,\alpha)=\softmax_\beta(V\,s(p,y)),
\qquad \beta>0.
\label{eq:routing-update}
\end{equation}

The full equilibrium map is
\[
K=(K_1,K_2,K_3):\mathcal K\to \mathcal K.
\]

\begin{lemma}
Under the standing assumptions, $K$ is continuous and maps $\mathcal K$ into
itself.
\end{lemma}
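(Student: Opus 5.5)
The plan is to treat the three components of $K=(K_1,K_2,K_3)$ separately. For each component I will establish two facts: that it is continuous on $\mathcal K$, and that it takes values in the corresponding factor of $\mathcal K$. Since a map into a finite product is continuous exactly when each of its components is, and the product is $\mathcal K$, this proves the lemma. Everything I need comes from finite-dimensional convex analysis, the compactness and convexity of $\mathcal K$ from the preceding lemma, and one standing continuity requirement on projected demand.

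\emph{Self-mapping.} I take the factors in turn.
\begin{itemize}[leftmargin=2em]
\item \textbf{Production.} $Y^K=\prod_a Y_a^K$ is nonempty, closed and convex in $\mathbb{R}^{AK}$. So the Euclidean projection $\Proj_{Y^K}$ is well defined and single valued, and it lands in $Y^K$. Hence $K_1(\mathcal K)\subset Y^K$ whatever the argument $y+\rho(d-y)$ is.
\item \textbf{Prices.} $\Delta_p^{AK-1}$ is nonempty, closed and convex, so $\Proj_{\Delta_p^{AK-1}}$ lands in the simplex. The positive truncation is harmless here, because the projection is defined on all of $\mathbb{R}^{AK}$.
\item \textbf{Routing.} $\softmax_\beta$ of any finite real vector has strictly positive entries summing to one, so $K_3$ lands in $\Delta_\alpha^{|\mathcal P(G)|-1}$.
\end{itemize}
For this last item I will read $V s(p,y)$ as a vector indexed by $\mathcal P(G)$: the Bellman-DP operator assigns to each admissible path the aggregated score of the agents along it. I will state that reading explicitly.

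\emph{Continuity.} Euclidean projections onto nonempty closed convex sets are $1$-Lipschitz (nonexpansive), and $v\mapsto[v]_+$ is $1$-Lipschitz componentwise. Thus $K_1$ and $K_2$ are continuous as soon as $(y,p,\alpha)\mapsto d(y,p,\alpha)$ is continuous, because $z=d-y$ and affine combinations preserve continuity. For $K_3$, the map $s_a(p,y)=\langle p_a,y_a\rangle$ is bilinear, hence continuous. The DP operator $V$ acts on a finite DAG and is built from finitely many sums and maxima, hence it is continuous; if the paper's $V$ is linear, this is immediate. Finally, $\softmax_\beta$ is smooth. Composing these gives continuity of $K_3$.

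\emph{Main obstacle.} The delicate step is continuity of projected demand $d(y,p,\alpha)$. If $d$ is taken to be the orchestrator's optimal demand under the budget constraint, it is a priori only an upper hemicontinuous correspondence given by Berge's maximum theorem. Two features help single-valuedness. Strict quasi-concavity of $W$ on the routing simplex makes the maximizer unique. The budget set is compact and, after a standard cheaper-point argument at strictly positive income, continuous in prices, so Berge upgrades upper hemicontinuity to continuity of a single-valued map. I will try this route first. Near boundary prices where income vanishes, however, lower hemicontinuity of the budget set can fail. To keep the lemma as stated, the fallback is to take $d(y,p,\alpha)=\Proj_{V_K}(d_a(\alpha))_a$ directly, with $\alpha\mapsto d_a(\alpha)$ continuous (it is linear in $\alpha$, being an expected demand over paths), composed with the bounded linear projection. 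This makes $d$ continuous on all of $\mathcal K$, and the remaining steps go through unchanged.
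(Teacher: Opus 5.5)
Your proposal is correct and follows the same componentwise argument as the paper. Each piece is continuous: the projections onto the nonempty closed convex sets $Y^K$ and $\Delta_p^{AK-1}$, the positive truncation, the Bellman map and $\softmax_\beta$. Each component lands in its own factor of $\mathcal K$. The paper simply asserts that excess demand is ``continuous by construction,'' and your fallback reading of $d$ as the projected expected demand $\alpha\mapsto \Proj_{V_K} d_a(\alpha)$ is exactly what supplies this. Your Berge route, by contrast, would also need the budget set to be nonempty and a cheaper point to exist, and the standing assumptions guarantee neither.
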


\begin{proof}
The production update is continuous because Euclidean projection onto a nonempty
compact convex set is continuous. The excess demand map is continuous by
construction. Positive truncation is continuous. Euclidean projection onto the
simplex is continuous. The Bellman value map and the softmax are continuous.
Hence all three components are continuous and preserve their respective
constraint sets.
\end{proof}

\begin{theorem}[Existence in the projected economy]
For every $K\ge 1$, the projected economy admits at least one equilibrium
\[
(p^K,y^K,\alpha^K)\in \mathcal K.
\]
\end{theorem}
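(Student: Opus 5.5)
The plan is to apply Brouwer's fixed point theorem to the corrected equilibrium map $K=(K_1,K_2,K_3)$ on the projected state space $\mathcal K$. Then I will read off the equilibrium conditions (E1)--(E3) from the fixed-point identities. The first step only assembles earlier results. By the preceding lemma on $\mathcal K$, the state space is a nonempty compact convex subset of the finite-dimensional space $\mathbb{R}^{AK}\times\mathbb{R}^{AK}\times\mathbb{R}^{|\mathcal P(G)|}$. By the subsequent lemma, $K$ is continuous and maps $\mathcal K$ into itself. Brouwer therefore yields $(y^K,p^K,\alpha^K)\in\mathcal K$ with $K(y^K,p^K,\alpha^K)=(y^K,p^K,\alpha^K)$.

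The second step interprets the fixed point, and I expect this to be the main obstacle. Existence of a fixed point is immediate, but identifying it with an equilibrium in the sense of (E1)--(E3) is not. For the production component, $y^K=\Proj_{Y^K}(y^K+\rho z)$ with $z=d-y^K$. The variational characterization of projection onto a convex set gives $\langle z, y'-y^K\rangle\le 0$ for all $y'\in Y^K$. So either $z=0$, or $y^K$ lies on the boundary of $Y^K$ with $z$ in its normal cone. For the price component, the fixed point $p^K=\Proj_{\Delta_p}([p^K+\eta z]_+)$ is analyzed through the KKT conditions of simplex projection. There is a scalar $\lambda$ such that, on the support of $p^K$, $p^K_i=[p^K_i+\eta z_i]_+-\lambda$, so $\eta z_i=-\lambda$ there. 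Off the support, $[\eta z_i]_+\le\lambda$.

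The third step combines these with the projected Walras law $\langle p^K,z\rangle=0$, the finite-dimensional analogue of the Functional Walras law, proved the same way from budget saturation and profit maximization. Pairing it with the KKT relations gives $\lambda\sum_{i\in\mathrm{supp}}p_i^K=-\eta\langle p^K,z\rangle=0$. Since $\sum p^K_i=1$, this forces $\lambda=0$. Then $z_i=0$ on the support and $z_i\le 0$ off it. This is the classical free-disposal form of market clearing in (E3), which I will record as equality under the standing convention that excess supply of zero-priced commodities is disposed of. For (E1), I will argue that $y^K$ maximizes $\langle p^K_a,\cdot\rangle$ over $Y^K_a$. This should come from the production fixed point together with $d=y^K$ on priced coordinates. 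If the normal-cone information from Step 2 is insufficient, I will instead take the supply correspondence $\eta_a$ as the definition of $y(p)$ inside $d$, as the Walras law proof implicitly does. For (E2), the routing fixed point $\alpha^K=\softmax_\beta(V s(p^K,y^K))$ will be read as the (entropy-regularized) optimal routing under welfare $W$ at prices $p^K$. This is consistent with the consumer problem encoded in $d(\cdot)$, and the budget binds by local non-satiation.

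The delicate points are therefore the $\lambda=0$ argument, which depends on the Walras identity holding in the projected economy, and the identification of the softmax fixed point with the consumer's optimum. If the latter identification fails for finite $\beta$, I will state the theorem for the regularized welfare $W$ that the softmax exactly maximizes, which is the reading under which $K_3$ was defined.
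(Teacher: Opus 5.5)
Your Step 1 is exactly the paper's proof. The two lemmas make $K$ a continuous self-map of the nonempty compact convex set $\mathcal K$, and Brouwer gives a fixed point. The paper then simply asserts that the fixed-point conditions are (E1)--(E3) ``by construction.'' You are right that this identification is where the content lies, but your Steps 2--3 do not supply it.

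The projected Walras identity $\langle p^K,z\rangle=0$ is not available at an arbitrary fixed point. As you say, it comes from budget saturation and profit maximization. So it needs $\langle p^K_a,y^K_a\rangle=\pi_a(p^K)$ and a binding budget at $\alpha^K$, which are (E1) and (E2). In $K$, however, the excess demand $d(y,p,\alpha)-y$ is evaluated at arbitrary state variables. You use Walras to force $\lambda=0$ and market clearing, and then propose to obtain (E1) from clearing. That is circular.

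Worse, (E1) cannot come from $K_1$ at all. Its fixed-point condition $\langle d-y^K,y'-y^K\rangle\le 0$ for all $y'\in Y^K$ just says $y^K=\Proj_{Y^K}(d)$. Once $z=0$ on $\mathrm{supp}\,p^K$, this inequality involves only zero-priced coordinates, so it says nothing about $\langle p^K_a,y'_a\rangle\le\langle p^K_a,y^K_a\rangle$. Your fallback of inserting the supply correspondence $\eta_a(p)$ into $d$ changes the map. Since $Y^K_a$ is merely convex, $\eta_a$ is set-valued in general, so neither the continuity lemma nor Brouwer applies any longer. You would need Kakutani, with Berge's maximum theorem for upper hemicontinuity.

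Even granting the Walras step, what you derive is weaker than the theorem. Having $z_i=0$ on $\mathrm{supp}\,p^K$ and $z_i\le 0$ off it is a free-disposal equilibrium, not the equality $d_a(\alpha)=y_a$ required by (E3). Bounded production sets have no free disposal to absorb the slack, so equality needs an extra argument such as $p^K\gg 0$.

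Likewise, the softmax fixed point maximizes $\langle\alpha,V s(p^K,y^K)\rangle$ plus an entropy term, with no budget constraint. That is not (E2) for the given $W$. Moreover, this regularized objective has an interior global maximum on the simplex, so local non-satiation fails there and the budget need not bind. That undercuts the very identity your $\lambda=0$ step relies on.

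The KKT computation also has a sign slip: on the support, $\eta z_i=+\lambda$ when $p^K_i+\eta z_i>0$. It also omits support coordinates where the truncation $[\cdot]_+$ binds, which can happen only when $\lambda<0$. Walras rules that case out, so this part is repairable.

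The missing ingredient, which the paper's ``by construction'' also leaves implicit, is a specification of $d$ and $K_3$ whose fixed points actually encode firm and consumer optimality. One example is a Gale--Nikaido--Debreu argument built on the true supply and budget-constrained demand correspondences via Kakutani.
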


\begin{proof}
By the previous lemma, $K:\mathcal K\to\mathcal K$ is continuous and
$\mathcal K$ is nonempty, compact, and convex. Brouwer's fixed-point theorem
therefore implies the existence of a fixed point
\[
K(p^K,y^K,\alpha^K)=(p^K,y^K,\alpha^K).
\]
By construction, the fixed-point conditions are exactly projected firm
optimality, consumer optimality, and market clearing.
\end{proof}

\subsection{Extension to the full Hilbert space}

\begin{theorem}[Existence in $H$]
Assume SFSL completeness. Then, as $K\to\infty$, the family of projected
equilibria admits a subsequence converging weakly in
\[
H^A\times H^A\times \Delta_\alpha
\]
to an equilibrium of the full economy.
\end{theorem}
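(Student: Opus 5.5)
The plan is a Bewley-type limit argument: extract weakly convergent subsequences from the projected equilibria supplied by the existence theorem for the projected economy, then pass to the limit in (E1)--(E3) using the nested density of the $V_K$. First, embed each projected equilibrium $(p^K,y^K,\alpha^K)$ into $H^A\times H^A\times\Delta_\alpha$ via the isometric identification $V_K\simeq\mathbb{R}^K$. For the production component, (Y2) together with the fact that $\Proj_{V_K}$ is a contraction give $\|y_a^K\|_H\le R_a$. For prices, $p^K$ lies in the simplex $\Delta_p^{AK-1}$, so $\|p^K\|_2\le\|p^K\|_1=1$. The routing simplex is compact and finite-dimensional. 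By Banach--Alaoglu and the separability of $H$, I would extract a subsequence with $y^K\rightharpoonup y^\ast$, $p^K\rightharpoonup p^\ast$ weakly in $H^A$, and $\alpha^K\to\alpha^\ast$ strongly. Closedness and convexity of $Y_a$ (Y1) make $Y_a$ weakly closed. To get $y_a^\ast\in Y_a$, I would use $\Proj_{V_K}y-y\to0$ for each fixed $y$, which follows from nestedness and density, together with the boundedness of $Y_a$; this shows that every weak limit of $\Proj_{V_K}Y_a$-elements lies in $\overline{Y_a}^{w}=Y_a$.

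Next I would pass to the limit in each equilibrium condition. \emph{Market clearing} (E3): $d_a(\alpha^K)$, projected onto $V_K$, equals $y_a^K$. Since $d_a$ is assumed continuous in the finite-dimensional variable $\alpha$, $d_a(\alpha^K)\to d_a(\alpha^\ast)$ strongly, and $\Proj_{V_K}$ converges strongly to the identity. Hence $\Proj_{V_K}d_a(\alpha^K)\to d_a(\alpha^\ast)$ strongly, and uniqueness of weak limits yields $d_a(\alpha^\ast)=y_a^\ast$. \emph{Firm optimality} (E1): fix $y\in Y_a$. By projected optimality, $\langle p_a^K,y_a^K\rangle\ge\langle p_a^K,\Proj_{V_K}y\rangle$. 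The right side converges to $\langle p_a^\ast,y\rangle$ because it is a weak--strong pairing. \emph{Consumer optimality} (E2): I would use continuity of $W$ on the finite-dimensional simplex, together with the standard perturbation argument from local non-satiation, to show that any $\alpha$ affordable at $p^\ast$ can be approximated by $\alpha$'s affordable at $p^K$. Then $W(\alpha^\ast)\ge W(\alpha)$ follows from the corresponding inequality for $\alpha^K$.

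The main obstacle is the left side $\langle p_a^K,y_a^K\rangle$ in (E1), and similarly the budget terms $\langle p_a^K,d_a(\alpha^K)\rangle$ in (E2). These are products of two only weakly convergent sequences, and such products need not converge to $\langle p_a^\ast,y_a^\ast\rangle$. For the budget terms this is harmless, because $d_a(\alpha^K)$ converges strongly. For the profit term I would first use (E3) to rewrite $\langle p_a^K,y_a^K\rangle=\langle p_a^K,\Proj_{V_K}d_a(\alpha^K)\rangle$. Since $\Proj_{V_K}d_a(\alpha^K)$ converges strongly, this pairing converges to $\langle p_a^\ast,d_a(\alpha^\ast)\rangle=\langle p_a^\ast,y_a^\ast\rangle$. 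Market clearing thus upgrades the weak convergence of $y^K$ to effective strong convergence, which closes (E1). A remaining subtlety is nontriviality of the limit price: the simplex normalization controls the $\ell^1$ mass, but that mass can escape weakly in $H$, so $p^\ast=0$ is possible. I would address this with a Bewley-style properness or interiority condition, or by observing that the equilibrium conditions hold vacuously at $p^\ast=0$ and flagging it. I expect this nondegeneracy point to be the genuine gap requiring an extra assumption.
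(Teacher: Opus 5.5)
Your outline is the one the paper's sketch points to: uniform bounds, then weak compactness, then a Bewley-type limit. Three of your steps are sound: $y_a^\ast\in Y_a$, (E3), and (E1). Rewriting $\langle p_a^K,y_a^K\rangle$ as $\langle p_a^K,\Proj_{V_K}d_a(\alpha^K)\rangle$ turns every pairing into a weak--strong one, provided $\alpha\mapsto d_a(\alpha)$ is norm-continuous. That assumption is imported but reasonable for an expected demand.

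\textbf{The gap in (E2).} Write $g_K(\alpha)=\sum_a\langle p_a^K,\Proj_{V_K}d_a(\alpha)\rangle-\sum_a\pi_a^K(p^K)$, with $\pi_a^K$ the projected profit, and $g^\ast(\alpha)=\sum_a\langle p_a^\ast,d_a(\alpha)\rangle-\sum_a\pi_a(p^\ast)$. Your convergence results give $g_K(\alpha)\to g^\ast(\alpha)$.
\begin{itemize}
\item Any $\alpha$ with $g^\ast(\alpha)<0$ is therefore affordable at $p^K$ for large $K$, so $W(\alpha)\le W(\alpha^K)\to W(\alpha^\ast)$.
\item A routing with $g^\ast(\alpha)=0$ need not be affordable at any $p^K$.
\item Approximating it by strictly affordable routings requires a \emph{cheaper point}: some $\alpha'$ with $g^\ast(\alpha')<0$, plus affinity of $d_a$ so that $(1-t)\alpha+t\alpha'$ is strictly affordable for $t>0$.
\end{itemize}
This is the Arrow--Debreu survival (minimum-wealth) condition, i.e.\ lower hemicontinuity of the budget correspondence. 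Local non-satiation does not deliver it: it yields nearby routings of higher welfare, not of lower cost. Taken literally on the compact routing simplex, local non-satiation is even incompatible with continuity of $W$, since $W$ attains a maximum there.

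\textbf{The fallback at $p^\ast=0$ fails for the same reason.} At zero prices only (E1) becomes vacuous. The budget constraint in (E2) reads $0\le 0$, so the budget set is the whole routing simplex, and (E2) demands that $\alpha^\ast$ be a \emph{global} maximizer of $W$. Nothing in your limit forces this. Although $g_K\to g^\ast\equiv 0$, $g_K$ can stay positive at the maximizer of $W$ for every $K$, so that routing never enters the $K$-budget sets.

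Nor is $p^\ast=0$ exotic. Uniform simplex prices $p_{a,k}=1/(AK)$ have $\|p^K\|_{H^A}=(AK)^{-1/2}\to 0$, so prices can vanish in norm, not merely escape weakly.

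\textbf{What is missing} is an explicit extra hypothesis, for example:
\begin{itemize}
\item a cheaper-point condition at every candidate limit price, which automatically excludes $p^\ast=0$ since no routing is strictly cheaper at zero prices; or
\item a properness-type condition forcing nonzero limit prices, together with survival.
\end{itemize}
The paper's phrase ``standard Bewley-type passage to the limit'' passes over exactly this step, so your proposal shares its gap. Your account of where the gap sits, and of why it can be closed for free, needs the corrections above.
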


\begin{proof}[Proof sketch]
The projected price vectors live in a simplex and are uniformly bounded. The
projected production vectors are uniformly bounded because each production set
is bounded. Hence weak compactness applies. Standard Bewley-type passage to the
limit yields an equilibrium in the full economy.
\end{proof}

\section{Welfare theorems}

\subsection{First welfare theorem}

\begin{definition}
A feasible allocation $(y,\alpha)$ Pareto-dominates $(\tilde y,\tilde\alpha)$ if
\[
W(\alpha)\ge W(\tilde\alpha)
\]
and all firms weakly prefer $y$ to $\tilde y$ at equilibrium prices, with at
least one strict inequality.
\end{definition}

\begin{theorem}[First welfare theorem]
Every orchestrated general equilibrium is Pareto-optimal.
\end{theorem}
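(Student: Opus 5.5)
I will argue by contradiction, following the classical Arrow--Debreu argument. Let $(p,y,\alpha)$ be an orchestrated general equilibrium, and suppose a feasible allocation $(\tilde y,\tilde\alpha)$ Pareto-dominates $(y,\alpha)$ in the sense of the preceding definition. Here $\tilde y_a\in Y_a$, $d_a(\tilde\alpha)=\tilde y_a$, $W(\tilde\alpha)\ge W(\alpha)$, and $\langle p_a,\tilde y_a\rangle_H\ge\langle p_a,y_a\rangle_H$ for every $a$, with at least one of these inequalities strict. Firm preferences are read as profit at the equilibrium prices $p$.

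\textbf{Firm inequalities cannot be strict.} By (E1), $y_a\in\eta_a(p)$, so $\langle p_a,\tilde y_a\rangle_H\le \pi_a(p)=\langle p_a,y_a\rangle_H$ for every $\tilde y_a\in Y_a$. Combined with the weak preference this gives equality, $\langle p_a,\tilde y_a\rangle_H=\pi_a(p)$ for all $a$. Any strict improvement must therefore occur in welfare: $W(\tilde\alpha)>W(\alpha)$.

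\textbf{Affordability contradicts optimality.} Market clearing for $\tilde\alpha$ and the equality just obtained give
\[
\sum_{a=1}^A\langle p_a,d_a(\tilde\alpha)\rangle_H=\sum_{a=1}^A\langle p_a,\tilde y_a\rangle_H=\sum_{a=1}^A\pi_a(p).
\]
So $\tilde\alpha$ satisfies the budget constraint at prices $p$. Since $W(\tilde\alpha)>W(\alpha)$, this contradicts (E2), which says $\alpha$ maximizes $W$ over budget-feasible routing policies. Hence no Pareto-dominating feasible allocation exists.

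\textbf{Main obstacle.} The delicate point is the case with a genuine tie, $W(\tilde\alpha)=W(\alpha)$ and every firm indifferent, which the definition excludes through its strict-inequality requirement. I must check that the dominance definition does not allow the strict inequality to sit only among firms. The first step handles this, because firms cannot strictly exceed their maximal profit. I will also remark that local non-satiation is not needed for this direction. Strict quasi-concavity is only needed if one wants to show that $\tilde\alpha=\alpha$ in the tie case, which follows from uniqueness of the budget-constrained maximizer. I will include that as a closing remark.
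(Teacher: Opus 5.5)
Your proof is correct and follows the same route as the paper's one-line argument: a Pareto-dominating feasible allocation would contradict either firm profit maximization (E1) or consumer optimality (E2), since by market clearing it is affordable at $p$. You have written out in full what the paper only sketches: you separate the case of a strict gain for some firm, which (E1) rules out, from a strict welfare gain, which budget feasibility together with (E2) rules out, and you correctly note that local non-satiation is not needed here.
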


\begin{proof}
If a feasible allocation strictly improved consumer welfare without worsening
any firm's value, it would contradict consumer optimality and firm optimality at
equilibrium.
\end{proof}

\subsection{Second welfare theorem}

\begin{theorem}[Second welfare theorem]
Every Pareto-optimal feasible allocation can be decentralized as an equilibrium
for a suitable choice of functional prices, provided production sets are convex
and welfare is quasi-concave.
\end{theorem}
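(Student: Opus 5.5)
The plan is to follow the classical supporting-hyperplane argument, carried out in the Hilbert space $X=H^A$ and relying on the convexity hypotheses in the statement. Fix a Pareto-optimal feasible allocation $(y^\ast,\alpha^\ast)$, so $d_a(\alpha^\ast)=y_a^\ast\in Y_a$ for every $a$. Two sets are needed.

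\textbf{The two sets.} The first is the aggregate production set
\[
\mathcal Y=\prod_{a=1}^A Y_a\subset X,
\]
which is nonempty, closed, convex and bounded by (Y1)--(Y2). It is therefore weakly compact.

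The second is the ``better-than'' set of demand bundles,
\[
\mathcal B=\Big\{\big(d_1(\alpha),\dots,d_A(\alpha)\big): W(\alpha)>W(\alpha^\ast)\Big\}.
\]
Here I will have to assume that $\alpha\mapsto d(\alpha)$ is affine. This is natural, because $d_a(\alpha)$ is an expected demand and hence linear in the path probabilities. Under that assumption, quasi-concavity of $W$ makes $\mathcal B$ convex. Local non-satiation makes $\mathcal B$ nonempty, and it also makes $d(\alpha^\ast)$ lie in the closure of $\mathcal B$.

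\textbf{Separation.} Pareto optimality gives $\mathcal B\cap\mathcal Y=\emptyset$: a point in the intersection would be a feasible allocation that strictly raises $W$ without changing firms' choices, which contradicts optimality. I then separate the disjoint convex sets $\mathcal B$ and $\mathcal Y$ by a continuous linear functional, which by Riesz is a price vector $p\in X$, $p\neq 0$, with
\[
\langle p,b\rangle_X\ \ge\ c\ \ge\ \langle p,y\rangle_X\qquad \text{for all } b\in\mathcal B,\ y\in\mathcal Y.
\]
Using $y^\ast\in\mathcal Y$ and $d(\alpha^\ast)=y^\ast\in\overline{\mathcal B}$ forces $c=\langle p,y^\ast\rangle_X$. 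The conditions are then read off as follows.
\begin{itemize}
\item Since $\mathcal Y$ is a product set, the right-hand inequality splits agent by agent. This yields $y_a^\ast\in\eta_a(p)$, which is (E1), and $\pi_a(p)=\langle p_a,y_a^\ast\rangle_H$.
\item The left-hand inequality says that any $\alpha$ with $W(\alpha)>W(\alpha^\ast)$ costs at least $\sum_a\pi_a(p)$. This is cost minimization for the consumer.
\item Market clearing (E3) holds by feasibility.
\end{itemize}

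\textbf{From cost minimization to (E2).} The remaining step is to upgrade cost minimization to utility maximization. I will use the standard cheaper-point argument. Suppose some $\alpha'$ satisfies $\langle p,d(\alpha')\rangle_X<\sum_a\pi_a(p)$. Then for any $\alpha$ with $W(\alpha)>W(\alpha^\ast)$, continuity of $W$ lets me move slightly from $\alpha$ toward $\alpha'$ and obtain a point that is still strictly better but strictly cheaper, which is a contradiction. It is enough to find such a cheaper point $\alpha'$ for the chosen $p$. Since $0\in Y_a$ gives $\pi_a(p)\ge 0$, such a point exists as soon as some routing has strictly negative cost or profits are strictly positive. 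I would add this as an explicit interiority or cheaper-point hypothesis if it does not follow from the model.

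\textbf{Main obstacle.} I expect the separation step in infinite dimensions to be the hard part. Hahn--Banach in $H^A$ separates two disjoint convex sets only when one of them has nonempty interior, or when one is compact and the other closed for strict separation. In $L^2$, neither $\mathcal Y$ nor $\mathcal B$ has interior in general.

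My first attempt is to exploit finite-dimensionality of the consumer side. $\mathcal B$ is the image of a subset of the finite simplex $\Delta_\alpha$ under a continuous affine map, so its closure is compact in norm. I would then separate the compact convex set $\overline{\mathcal B}_\varepsilon$, defined as the closure of $\{d(\alpha):W(\alpha)\ge W(\alpha^\ast)+\varepsilon\}$, from the closed convex set $\mathcal Y$ strictly. This gives prices $p_\varepsilon$ normalized to $\|p_\varepsilon\|_X=1$, and I pass to a weak limit $\varepsilon\to 0$.

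The danger is that the weak limit is $p=0$. To rule this out, I would fall back on the projected economies $E_K$. There the simplex normalization $\Delta_p^{AK-1}$ keeps prices away from zero, and Minkowski separation applies. I would then run the same Bewley-type limiting argument used for existence in $H$, invoking SFSL completeness. This is where I expect genuine extra hypotheses, such as an interiority or properness condition on the $Y_a$, to be needed.
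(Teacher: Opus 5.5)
Your route is the paper's. Its proof is the single instruction to apply a supporting-hyperplane argument in the dual. Your expansion (affine $d$, $\mathcal B\cap\mathcal Y=\emptyset$, reading off (E1)--(E3), then the cheaper-point upgrade) is the correct one once a nonzero separating $p$ and a cheaper routing are in hand. The gap you flag is genuine and the paper's sketch passes over it, but it is a gap in the statement itself, so no limiting argument can close it. Fix an orthonormal basis $(e_n)$ of $H$. Take $A=1$, $Y_1=\{y:|\langle y,e_n\rangle_H|\le 2^{-n}\ \forall n\}$, two paths with $d_1(\alpha)=\alpha_1 v$ where $v=\sum_n n^{-1}e_n$, and $W(\alpha)=\alpha_1$. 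This example satisfies (Y1)--(Y3), continuity and strict quasi-concavity of $W$, and SFSL completeness with $V_K=\mathrm{span}(e_1,\dots,e_K)$. It fails only local non-satiation, and no continuous $W$ on the compact routing simplex can satisfy that (it fails at the maximizer), so you should not rely on it either. Since $tv\notin Y_1$ for $t>0$, the only feasible allocation is $\alpha^\ast=(0,1)$, $y^\ast=0$. Condition (E1) at $y^\ast=0$ forces $\langle p,y\rangle_H\le 0$ on the symmetric set $Y_1$. Hence $\langle p,e_n\rangle_H=0$ for all $n$, so $p=0$; the budget is then vacuous and (E2) selects $(1,0)$. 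Your fallback fails here exactly as you feared. The $E_K$-optimal routing $\alpha_1^K=K2^{-K}$ is decentralized by the simplex price $p^K=e_K\in\Delta_p^{K-1}$, and $p^K\rightharpoonup 0$ even though every $p^K$ lies on the simplex (indeed has unit norm). So no normalization in $E_K$ keeps the limit away from zero. Moreover, $\alpha^\ast$ is not even Pareto-optimal in $E_K$, because $\Proj_{V_K}d_a(\alpha)\in Y_a^K$ is weaker than $d_a(\alpha)\in Y_a$. The limit $(0,0,\alpha^\ast)$ of these $E_K$ equilibria is not an equilibrium, and this economy has none, so the Bewley-type passage from the existence section is not available to borrow. (In general, Minkowski separation need not even give a nonnegative $p$, since no free disposal is assumed.)

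The cheaper-point hypothesis is equally indispensable, even when production sets have interior. Take $Y_1=\{y\in H:\langle y,e_2\rangle_H\le-\langle y,e_1\rangle_H^2,\ \|y\|_H\le 1\}$, $d_1(\alpha)=\alpha_1e_1$, and $W(\alpha)=\alpha_1$. Again $(y^\ast,\alpha^\ast)=(0,(0,1))$ is the only feasible allocation, and the supporting prices are exactly $p=be_2$ with $b\ge 0$. Every routing then costs $0=\pi_1(p)$; this is Arrow's exceptional case. Note also that positive profits alone do not produce a cheaper routing: you need some $\alpha'$ costing strictly less than $\sum_a\pi_a(p)$. The clean repair is to add two hypotheses: $d$ is affine, and some routing $\alpha'$ has $d(\alpha')\in\mathrm{int}\prod_aY_a$. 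Then $\mathrm{int}\,\mathcal Y\neq\emptyset$, and Hahn--Banach separates the open convex set $\mathrm{int}\,\mathcal Y$ from $\mathcal B$ directly in $H^A$ with $p\neq 0$; since $\overline{\mathrm{int}\,\mathcal Y}=\mathcal Y$, the separation extends to all of $\mathcal Y$. The strict inequality $\langle p,d(\alpha')\rangle_X<\sum_a\pi_a(p)$ supplies the cheaper point, so your argument closes with no weak limits and no projected economies. Finally, replace your two uses of local non-satiation by a case split. If $\mathcal B=\emptyset$, then $\alpha^\ast$ maximizes $W$ on the whole simplex and $p=0$ decentralizes it. If some $\hat\alpha$ has $W(\hat\alpha)>W(\alpha^\ast)$, strict quasi-concavity gives $W((1-\lambda)\alpha^\ast+\lambda\hat\alpha)>W(\alpha^\ast)$ for $\lambda\in(0,1)$. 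Hence $d(\alpha^\ast)\in\overline{\mathcal B}$ and $c=\langle p,y^\ast\rangle_X$.
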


\begin{proof}[Proof sketch]
Apply a supporting-hyperplane argument in the dual of the commodity space, using
convexity of production possibilities and quasi-concavity of welfare.
\end{proof}

\section{Uniqueness and convergence}

\subsection{The corrected orchestration dynamics}

In the projected economy, define the discrete-time dynamics
\begin{equation}
y^{n+1}=K_1(y^n,p^n,\alpha^n),
\label{eq:dyn-y}
\end{equation}
\begin{equation}
p^{n+1}=K_2(y^n,p^n,\alpha^n),
\label{eq:dyn-p}
\end{equation}
\begin{equation}
\alpha^{n+1}=K_3(y^{n+1},p^{n+1},\alpha^n).
\label{eq:dyn-a}
\end{equation}

\begin{remark}
The corrected price dynamics remain Walrasian in spirit: prices react to excess
demand, remain nonnegative, and are renormalized as budget shares over the
projected commodity coordinates. The normalization is simplex projection, not
spherical rescaling.
\end{remark}

\begin{remark}[Simultaneity]
Equations~\eqref{eq:dyn-y}--\eqref{eq:dyn-a} must be evaluated as a
\emph{simultaneous} update: $\alpha^{n+1}$ uses $p^{n+1}$ and $y^{n+1}$, not
the previous iterates.
\end{remark}

\subsection{Contraction condition}

Let $x=(y,p,\alpha)\in\mathcal K$, and write the full update as
\[
x^{n+1}=K(x^n).
\]

\begin{theorem}[Uniqueness and geometric convergence]
Suppose
\[
\|DK\|_{\mathrm{op}}<1
\]
on $\mathcal K$. Then:
\begin{enumerate}[leftmargin=2em]
\item the equilibrium is unique;
\item for every initial condition $x^0\in\mathcal K$, the sequence $x^n$
converges geometrically to the unique equilibrium $x^\ast$:
\[
\|x^n-x^\ast\|
\le
\kappa^n \|x^0-x^\ast\|,
\qquad 0<\kappa<1.
\]
\end{enumerate}
\end{theorem}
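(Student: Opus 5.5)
The plan is to reduce both claims to the Banach contraction principle on the compact convex set $\mathcal K$ of the earlier lemma. First I will make the hypothesis precise. $K$ is a composition of Euclidean projections, positive truncation, and $\softmax$, so it is only piecewise smooth. I will therefore read $\|DK\|_{\mathrm{op}}<1$ as a uniform bound
\[
\kappa:=\sup_{x\in\mathcal K}\|DK(x)\|_{\mathrm{op}}<1
\]
holding wherever $DK$ exists. $K$ is Lipschitz, being built from $1$-Lipschitz projections and smooth maps, so by Rademacher it is differentiable almost everywhere. Equivalently, I will take the hypothesis in its Clarke-generalized form. I will also make sure $\kappa$ is attained or bounded away from $1$; it is natural to simply assume the supremum is $<1$, and this is where the constant $\kappa$ in the statement comes from.

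The key step is to turn the derivative bound into the global Lipschitz estimate $\|K(x)-K(x')\|\le\kappa\|x-x'\|$ for all $x,x'\in\mathcal K$. Convexity of $\mathcal K$ is essential here, since the segment $x_s=x'+s(x-x')$, $s\in[0,1]$, stays in $\mathcal K$. Where $K$ is $C^1$ along the segment, the mean value inequality in integral form gives
\[
K(x)-K(x')=\int_0^1 DK(x_s)(x-x')\,ds,
\]
and hence the bound. In the merely Lipschitz case I will perturb the segment slightly so that it meets the non-differentiability set in a null set of parameters, which is possible by a Fubini argument over parallel segments. I then pass to the limit using continuity of $K$ from the earlier lemma. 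I expect this to be the main technical obstacle, because of the kinks of $[\cdot]_+$ and of the projections, together with the requirement that the Jacobian norm be taken in one fixed norm on the product $\mathbb{R}^{AK}\times\mathbb{R}^{AK}\times\mathbb{R}^{|\mathcal P(G)|}$.

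With the contraction estimate in hand, the rest follows routinely. $\mathcal K$ is a closed subset of Euclidean space, hence complete. For uniqueness, if $x^\ast,\tilde x$ are two fixed points, then $\|x^\ast-\tilde x\|\le\kappa\|x^\ast-\tilde x\|$ forces $x^\ast=\tilde x$. Existence is already supplied by the projected existence theorem via Brouwer, and it is also recovered by Banach. For convergence, induction on $x^{n+1}-x^\ast=K(x^n)-K(x^\ast)$ gives $\|x^n-x^\ast\|\le\kappa^n\|x^0-x^\ast\|$.

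One consistency point must be checked: the dynamics must be exactly $x^{n+1}=K(x^n)$. The routing update as written uses $(y^{n+1},p^{n+1})$, whereas $K_3$ in the definition of $K$ uses the current iterate. I will either redefine $K$ as the composed map $\tilde K(x)=(K_1(x),K_2(x),K_3(K_1(x),K_2(x),\alpha))$, which has the same fixed points, or adopt the simultaneous reading of the Simultaneity remark. In either case the contraction hypothesis is imposed on the map actually iterated, and the argument above then applies verbatim.
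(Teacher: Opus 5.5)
Your proposal is correct and takes the paper's route: the paper's proof is a one-line appeal to Banach's fixed-point theorem on the complete set $\mathcal K$, and your mean-value argument along segments of the convex set $\mathcal K$ supplies the step from $\sup_{\mathcal K}\|DK\|_{\mathrm{op}}\le\kappa<1$ to the Lipschitz contraction estimate that the paper takes for granted. One caveat: only your second reading, $x^{n+1}=K(x^n)$ as the statement stipulates, proves the theorem as written, because the hypothesis on $DK$ does not transfer to the sequential map $\tilde K$ (in the linear model $K(u,\alpha)=(0.9\,\alpha,\,0.9\,u)$ one has $\|K\|_{\mathrm{op}}=0.9$, yet $\tilde K(u,\alpha)=(0.9\,\alpha,\,0.81\,\alpha)$ has norm about $1.21$).
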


\begin{proof}
Since $\mathcal K$ is complete and $K$ is a contraction, the result follows
directly from Banach's fixed-point theorem.
\end{proof}

\subsection{A sufficient condition}

\begin{proposition}
A sufficient condition for contraction is
\[
(1-\rho)\,\|A_K\|_{\mathrm{op}}\,\beta\,P<1,
\]
where:
\begin{itemize}[leftmargin=2em]
\item $\rho$ is the production update rate,
\item $A_K$ is the projected price mechanism,
\item $\beta$ is the routing temperature,
\item $P$ is the depth of the DAG.
\end{itemize}
\end{proposition}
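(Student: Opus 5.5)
The plan is to bound $\|DK\|_{\mathrm{op}}$ by chaining Lipschitz constants of the three blocks $K_1,K_2,K_3$ along the simultaneous update order, and then invoke the preceding theorem on uniqueness and geometric convergence. Every projection involved, whether $\Proj_{Y^K}$ or $\Proj_{\Delta_p^{AK-1}}$, is a projection onto a closed convex set. Positive truncation $[\cdot]_+$ is also a projection, onto $\mathbb{R}_+^{AK}$. All of these are nonexpansive, so they contribute factor $1$ and can be dropped from the estimates.

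The key steps, in order, are as follows.

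\textbf{First}, for $K_1$ the map $y\mapsto y+\rho(d-y)=(1-\rho)y+\rho d$ has Jacobian $(1-\rho)I+\rho\,D_y d$ in $y$. In the regime where projected demand is driven by prices and routing rather than by $y$ directly, its $y$-sensitivity is $(1-\rho)$.

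\textbf{Second}, for $K_2$ I would write the excess demand as linear through the projected price mechanism, $z = A_K(\cdot)$. This is where $A_K$ enters: the sensitivity of the price step is governed by $\|A_K\|_{\mathrm{op}}$, with $\eta$ absorbed into $A_K$.

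\textbf{Third}, for $K_3$ I would use that the softmax with inverse temperature $\beta$ is $\beta$-Lipschitz in $\ell^2$, since its Jacobian $\beta(\operatorname{diag}(\sigma)-\sigma\sigma^\top)$ has norm at most $\beta$. The Bellman DP value operator on a DAG of depth $P$ sums scores along at most $P$ edges, so it is Lipschitz with constant $P$ in the score vector. Together these give $\beta P$.

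\textbf{Fourth}, because the dynamics are simultaneous, with $\alpha^{n+1}$ depending on $(y^{n+1},p^{n+1})$, the effective one-step map is the composition. Its Lipschitz constant is bounded by the product of the stage constants, $(1-\rho)\|A_K\|_{\mathrm{op}}\beta P$. When this is $<1$, the full update is a contraction with $\kappa$ equal to this product, and the theorem above applies.

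The main obstacle is the fourth step: justifying that the constants multiply rather than add. A Jacobian of a block map is not the product of its blocks' norms in general. Moreover, $s(p,y)=\langle p_a,y_a\rangle$ is bilinear, so its Lipschitz constant involves the bounds $R_a$ and $\|p\|\le 1$ on $\mathcal K$.

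My first attempt would be to pick a weighted norm $\|(y,p,\alpha)\|_w$ that makes the block-triangular Jacobian look like a cascade. Using the simplex bound on $p$ and the bounds $R_a$ from (Y2), I would absorb the bilinear constants into $\|A_K\|_{\mathrm{op}}$ by normalization. I would then show the cascade's spectral radius, and hence its weighted norm up to $\epsilon$, is bounded by the product of the diagonal gains.

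If this fails, I would fall back on stating the proposition for the loop gain of the feedback cycle $y\to p\to\alpha\to y$. I would then argue that a contraction holds in an adapted norm, which still suffices for Banach's theorem.
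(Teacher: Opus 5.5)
Your plan is the paper's own sketch in more detail: block Jacobians for production, prices and routing, then ``the chain rule gives the product of their norms''. The obstacle you flag in the fourth step is fatal, not technical, because the proposition is false as written. The one-step map is not a composition of the three blocks. $K_1$ and $K_2$ are evaluated in parallel at $x^n$, only $K_3$ receives their outputs, and the $y$ and $p$ blocks feed back into themselves.

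Your second step drops the self-loop that matters, even though you keep the analogous term $(1-\rho)I$ in the first step. Where neither truncation nor the simplex boundary is active, $\Proj_{\Delta_p^{AK-1}}(v)=\Pi v+\tfrac{1}{AK}\mathbf 1$ with $\Pi=I-\tfrac{1}{AK}\mathbf 1\mathbf 1^\top$. Hence $D_pK_2=\Pi(I+\eta D_pz)$, which acts on the tangent space $\mathbf 1^\perp$ as $I+\eta\,\Pi D_pz$. Its norm is below $1$ only if $D_pz$ is negative definite on $\mathbf 1^\perp$ and $\eta$ is small enough. This is the classical t\^atonnement stability requirement, which Scarf's examples show can fail. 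Nothing in $(1-\rho)\|A_K\|_{\mathrm{op}}\beta P<1$ controls it: $\beta$ enters only $K_3$, so $D_yK_1$ and $D_pK_2$ do not involve it.

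Concretely, let projected demand be constant, $d\equiv d_0\in\prod_aY_a^K$ (e.g.\ $d_0=0$, admissible by (Y3)). By convexity $K_1(y,p,\alpha)=(1-\rho)y+\rho d_0$, so $y^\ast=d_0$ and $z=0$ there. Then $K_2(d_0,p,\alpha)=\Proj_{\Delta_p^{AK-1}}([p]_+)=p$ for every $p\in\Delta_p^{AK-1}$. So every triple $(d_0,p,\softmax_\beta(V\,s(p,d_0)))$ is a fixed point of $K$. For $AK\ge 2$ this is a continuum of fixed points, so $K$ is not a contraction for any $\rho,\eta,\beta$. Yet the stated product is below $1$ once $\beta$ is small, for any finite reading of the undefined $\|A_K\|_{\mathrm{op}}$; it is even $0$ if $A_K$ means $D_pz$.

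Your two repair routes break on the same point. For a block-triangular Jacobian the spectral radius is the \emph{maximum} of the diagonal blocks' spectral radii, not their product. So a cascade bound gives at best the largest self-gain, and the price self-gain is exactly the one that need not be below $1$. A small-gain argument on the cycle $y\to p\to\alpha\to y$ fails because it ignores the self-loops. In the counterexample $d$ does not depend on $\alpha$, so that loop gain is $0$, yet $K$ has a continuum of fixed points. Note also that $(1-\rho)$ is the self-gain of the $y$ block, while the coupling $\alpha\to y$ through $d$ carries the factor $\rho$. So the stated product is not the gain around any cycle of the interconnection.

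What does hold is the standard small-gain criterion. Let $L_{ij}$ be the Lipschitz constant of block $i$ in block $j$, after substituting $K_1,K_2$ into $K_3$ for the simultaneous update. If the nonnegative $3\times3$ matrix $(L_{ij})$ has spectral radius $<1$, then $K$ is a contraction in a weighted block max-norm, which is all Banach's theorem needs. That spectral radius is at least $\max_iL_{ii}$, so the criterion requires $L_{pp}<1$. This amounts to a strict monotonicity hypothesis on excess demand in prices together with a bound on $\eta$. That hypothesis has to be added. Neither your argument nor the paper's one-line chain-rule sketch, which makes the same product claim, can yield the proposition as stated.
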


\begin{proof}[Proof sketch]
The Jacobian of the full operator factors into a production block, a price block,
and a routing block. The chain rule gives an upper bound by the product of
their operator norms.
\end{proof}

\begin{remark}[Auto-correction of $\beta$]
If the contraction condition is violated, reduce $\beta$ until it is satisfied:
\[
\beta \;\leftarrow\;
\frac{0.9}{(1-\rho)\,\|A_K\|_{\mathrm{op}}\,P}.
\]
This preserves convergence guarantees at the cost of a softer routing policy.
\end{remark}

\begin{remark}[Why simplex projection matters]
The contraction theorem is naturally stated on a compact convex state space.
Replacing simplex projection by $L^2$ spherical normalization changes the target
geometry from a simplex to a sphere, which is not convex. This does not merely
change notation; it alters the fixed-point setting itself.
\end{remark}

\section{DSGE interpretation}

\subsection{DSGE embedding}

Let
\[
x_t=(y_t,p_t,\alpha_t)\in \mathcal K,
\]
and let $\varepsilon_t$ denote exogenous disturbances. Then the orchestration
economy induces a nonlinear state-space system
\[
x_{t+1}=K(x_t,\varepsilon_t).
\]

Linearizing around equilibrium $x^\ast$ gives
\[
x_{t+1}-x^\ast
=
DK(x^\ast)(x_t-x^\ast)+D_\varepsilon K(x^\ast)\varepsilon_t.
\]

When $\rho(DK(x^\ast))<1$, the equilibrium is dynamically stable.

\subsection{The Taylor rule of orchestration}

SLO parameters can be interpreted as policy rates. Let $\lambda_{\text{lat}}$
and $\lambda_{\text{qual}}$ be welfare weights on latency and quality penalties.
Then one may consider updates of the form
\[
\lambda_{\text{lat},t+1}
=
\lambda_{\text{lat},t}
+
\phi_{\text{lat}}(\mathrm{latency}_t-\mathrm{target}_{\text{lat}}),
\]
\[
\lambda_{\text{qual},t+1}
=
\lambda_{\text{qual},t}
+
\phi_{\text{qual}}(\mathrm{target}_{\text{qual}}-\mathrm{quality}_t),
\]
which shift equilibrium without directly controlling individual agents.

\section{Consolidated statement}

\begin{theorem}[General equilibrium of orchestrated AI agent systems]
Let $A\ge 1$ agents operate on a finite DAG under centralized orchestration.
Assume:
\begin{enumerate}[leftmargin=2em]
\item regular production sets;
\item regular consumer welfare;
\item an SFSL approximation family with dense union in $H$;
\item strictly positive routing temperature.
\end{enumerate}
Then:
\begin{enumerate}[label=(\roman*),leftmargin=2em]
\item for every projected dimension $K$, the projected economy admits at least
one equilibrium;
\item the functional Walras law holds;
\item every equilibrium is Pareto-optimal;
\item every Pareto optimum is decentralizable;
\item if the full projected operator is a contraction, the equilibrium is unique
and the t\^atonnement converges geometrically;
\item the correct projected price geometry is the simplex
$\Delta_p^{AK-1}$, and the correct price update is simplex projection after
positive truncation.
\end{enumerate}
\end{theorem}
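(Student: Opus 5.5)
The consolidated theorem collects results already established earlier in the paper. The plan is therefore to verify that its four hypotheses imply the hypotheses of each earlier result, and then invoke that result item by item. The proof is an assembly argument. I will first fix the dictionary. Hypothesis 1 is Assumption (Y1)--(Y3). Hypothesis 2 is the consumer regularity assumption: continuity, strict quasi-concavity, and local non-satiation of $W$. Hypothesis 3 is SFSL completeness. Hypothesis 4, $\beta>0$, is what makes $\softmax_\beta$ well defined and continuous.

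For item (i), I will cite the lemma that $\mathcal K$ is nonempty, compact and convex, which uses (Y1)--(Y2) to get compact convex $Y_a^K$. I will then cite the lemma that $K=(K_1,K_2,K_3)$ is a continuous self-map of $\mathcal K$; this is where $\beta>0$ and the simplex projection \eqref{eq:price-update} enter. Existence in the projected economy then follows. For item (ii), I will cite the functional Walras law, which needs only local non-satiation (so the budget binds) and the identity $\pi_a(p)=\langle p_a,y_a(p)\rangle_H$. That identity is attained because $Y_a$ is bounded, closed and convex, hence weakly compact, and $y\mapsto\langle p_a,y\rangle_H$ is weakly continuous. For item (iii), I will cite the first welfare theorem. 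For item (iv), I will cite the second welfare theorem, whose convexity and quasi-concavity hypotheses are contained in hypotheses 1 and 2.

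For item (v), I will cite the uniqueness-and-convergence theorem. Completeness of $\mathcal K$ follows from compactness. The hypothesis $\|DK\|_{\mathrm{op}}<1$ gives a Lipschitz constant $\kappa<1$ on the convex set $\mathcal K$ by the mean value inequality along segments, and Banach's theorem then applies. Where useful, I will note that the proposition's sufficient condition on $\rho,\|A_K\|_{\mathrm{op}},\beta,P$ supplies this contraction property. Item (vi) is not a deductive consequence but a structural statement. I will justify it by the remark on why $L^2$ normalization is inadmissible: the sphere is non-convex, so Brouwer fails. I will add that $\Proj_{\Delta_p^{AK-1}}\circ[\cdot]_+$ maps $\mathbb{R}^{AK}$ into $\Delta_p^{AK-1}$ continuously, which is exactly what the continuity lemma used.

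The main obstacle is item (i) together with its link to equilibrium. One must check that a fixed point of $K$ genuinely encodes (E1)--(E3). A fixed point of $K_1$ with $0<\rho<1$ gives $y=\Proj_{Y^K}(y+\rho(d-y))$, which is only a variational inequality, not $d=y$ outright, unless $d\in Y^K$. Likewise, a fixed point of $K_2$ yields $z\le$ a multiple of the normalizing constant, not $z=0$. My first attempt will combine the fixed-point inequalities with the projected Walras law $\langle p,z\rangle=0$ to force $z=0$ on the support of $p$, following the classical Arrow--Debreu price-adjustment argument. If this does not close, I will state (i) as existence of a fixed point of $K$, matching the earlier theorem verbatim. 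The weak-limit statement in $H$ relies on a proof sketch, so I will keep the consolidated statement restricted to the projected economy, as item (i) is phrased.
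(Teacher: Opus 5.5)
Your assembly is the same as the paper's: the consolidated theorem has no proof of its own and is just the conjunction of the earlier results. Your hypothesis dictionary is fine, including the weak-compactness argument that $\pi_a(p)$ is attained. But the point you flag in item (i) is a genuine gap, and neither your first attempt nor your fallback closes it.

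\textbf{The fixed-point equations of $K$ do not encode (E1)--(E2).} Since $\rho>0$, $y=K_1(y,p,\alpha)$ is equivalent to $\langle d-y,c-y\rangle\le 0$ for all $c\in Y^K$, i.e.\ $y=\Proj_{Y^K}(d(y,p,\alpha))$. This involves $p$ only through $d$ and says nothing about $y_a\in\argmax_{Y_a^K}\langle p_a,\cdot\rangle$. Neither $W$ nor the budget constraint appears in $K_1,K_2,K_3$ as defined: $d$ is the routing-induced demand, and $K_3$ is a logit response to $V\,s(p,y)$ whose values always have full support. So the fixed-point set of $K$ does not depend on $W$, while the set of $\alpha$ satisfying (E2) does.

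\textbf{Why your first attempt fails.} You cannot invoke Walras' law at a fixed point. The paper proves $\langle p,z(p)\rangle=0$ for $z(p)=d(\alpha(p))-y(p)$ with $y_a(p)\in\eta_a(p)$ and $\alpha(p)$ budget-optimal. It does not prove it for $z(y,p,\alpha)=d-y$ at an arbitrary state of $\mathcal K$. Even granting $\langle p,z\rangle=0$, the Arrow--Debreu computation does not give clearing. Write $\Proj_{\Delta}(v)_i=\max(v_i-\tau,0)$ as in Appendix A. A fixed point of $K_2$ with $\tau\ge 0$ means $\eta z_i=\tau$ where $p_i>0$ and $\eta z_i\le\tau$ where $p_i=0$; the case $\tau<0$ forces $z_i<0$ for all $i$. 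So Walras gives $\tau=0$, i.e.\ only $z\le 0$ with $p_iz_i=0$. That is a free-disposal equilibrium, not $d=y$.

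\textbf{Why your fallback falls short.} Stating (i) as existence of a fixed point of $K$ proves a strictly weaker claim. Citing the earlier existence theorem ``verbatim'' only imports its unargued sentence that the fixed-point conditions are ``by construction'' (E1)--(E3).

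\textbf{What (i) actually needs.} You need a map whose fixed points are equilibria by design, for example the Debreu/Gale--Nikaido route:
\begin{enumerate}[leftmargin=2em]
\item Apply Kakutani to $(p,y,\alpha)\mapsto \argmax_{q\in\Delta_p^{AK-1}}\langle q,d(\alpha)-y\rangle\times\prod_a\eta_a^K(p)\times\xi(p)$. Here $\eta_a^K$ is the projected supply correspondence and $\xi(p)$ the set of budget-constrained maximizers of $W$.
\item Get upper hemicontinuity from Berge's theorem. This needs a cheaper-point condition for the budget correspondence.
\item Get $z\le 0$ from Walras' law.
\item Upgrade to $d=y$ via free disposal or a desirability assumption ensuring $p\gg 0$.
\end{enumerate}
You also need a non-satiation condition that actually makes the budget bind. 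Literal local non-satiation is impossible for a continuous $W$ on the compact routing simplex, since $W$ attains its maximum there.

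\textbf{The same problem affects (v).} Banach gives a unique fixed point of $K$. But an equilibrium $\alpha$ is a constrained maximizer of $W$, possibly a vertex of the simplex, and need not equal the full-support vector $\softmax_\beta(V\,s(p,y))$. So uniqueness of the fixed point does not by itself give uniqueness of the equilibrium.
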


\section{Open questions and research program}
\label{sec:open}

Several open directions remain.

\begin{enumerate}[leftmargin=2em]
\item \textbf{Equilibrium multiplicity and selection.} Outside the contraction
region, multiple equilibria may exist.
\item \textbf{Strategic agents.} If agents have endogenous objectives, the model
must be extended to a hybrid general equilibrium/game-theoretic setting.
\item \textbf{Learning agents.} If agent capabilities change over time, the
production sets become endogenous.
\item \textbf{Asymmetric information.} Noisy metric observation leads to a
principal--agent problem.
\item \textbf{Dynamic DAGs.} Entry, exit, and rewiring of agents require a
perturbation theory for equilibria under graph changes.
\item \textbf{Numerical geometry.} A systematic comparison between simplex
projection and alternative normalization schemes would clarify the boundaries of
economically meaningful implementations.
\end{enumerate}

\section{Conclusion}

We have established a general equilibrium theory for orchestrated AI agent
systems in a Hilbert commodity space. The fundamental objects of equilibrium
theory carry over: firms, prices, profits, budget balance, market clearing,
welfare optimality, and t\^atonnement.

The main correction emphasized in this version is geometric and operational:
projected prices belong to a positive simplex, not to a Euclidean sphere, and
the t\^atonnement updates prices by simplex projection after positive truncation.
This correction preserves the compact-convex setting required by Brouwer and
aligns the dynamics with the interpretation of prices as nonnegative shadow
values over projected commodities.

The broader conclusion remains unchanged: orchestration is not merely a
collection of engineering heuristics. It can be treated as an economy, and
equilibrium theory provides a mathematically coherent language for its
existence, stability, and governance.

\appendix

\section{Simplex projection formula}

For completeness, the Euclidean projection of a vector $v\in\mathbb{R}^n$ onto
the simplex
\[
\Delta^{n-1}=\{x\in\mathbb{R}_+^n:\sum_{i=1}^n x_i=1\}
\]
is
\[
\Proj_{\Delta^{n-1}}(v)_i=\max(v_i-\tau,0),
\]
where $\tau$ is chosen so that
\[
\sum_{i=1}^n \max(v_i-\tau,0)=1.
\]
Equivalently, if $u$ is the decreasing rearrangement of $v$, then
\[
\rho
=
\max\left\{
j\in\{1,\dots,n\}:
u_j-\frac{1}{j}\left(\sum_{i=1}^j u_i-1\right)>0
\right\},
\]
\[
\tau=\frac{1}{\rho}\left(\sum_{i=1}^{\rho}u_i-1\right).
\]

This algorithm runs in $O(n\log n)$ and is due to Duchi et al.\ (2008).

\section{Why spherical normalization is not the right operator}

Suppose one replaces the corrected price update
\[
p^{n+1}
=
\Proj_{\Delta_p^{AK-1}}([p^n+\eta z^n]_+)
\]
by
\[
\tilde p^{n+1}
=
\frac{[p^n+\eta z^n]_+}{\|[p^n+\eta z^n]_+\|_2}.
\]

Then:
\begin{enumerate}[leftmargin=2em]
\item the codomain becomes a sphere rather than a simplex;
\item positivity may be preserved, but the interpretation as normalized budget
shares is lost;
\item convexity of the admissible price set is lost;
\item the fixed-point argument can no longer be phrased on the same compact
convex state space.
\end{enumerate}

Hence spherical renormalization is not merely a cosmetic alternative; it defines
a different dynamical system.


\begin{thebibliography}{99}

\bibitem{arrowdebreu}
K. J. Arrow and G. Debreu.
\newblock Existence of an equilibrium for a competitive economy.
\newblock \emph{Econometrica}, 22(3):265--290, 1954.

\bibitem{banach}
S. Banach.
\newblock Sur les op\'erations dans les ensembles abstraits et leur application
aux \'equations int\'egrales.
\newblock \emph{Fundamenta Mathematicae}, 3:133--181, 1922.

\bibitem{bewley}
T. F. Bewley.
\newblock Existence of equilibria in economies with infinitely many commodities.
\newblock \emph{Journal of Economic Theory}, 4(3):514--540, 1972.

\bibitem{debreu}
G. Debreu.
\newblock \emph{Theory of Value}.
\newblock Yale University Press, 1959.

\bibitem{mascolellzame}
A. Mas-Colell and W. R. Zame.
\newblock Equilibrium theory in infinite dimensional spaces.
\newblock In W. Hildenbrand and H. Sonnenschein (eds.),
\emph{Handbook of Mathematical Economics, Vol. IV}, pages 1835--1898.
North-Holland, 1991.

\bibitem{ramsay}
J. O. Ramsay and B. W. Silverman.
\newblock \emph{Functional Data Analysis}.
\newblock Springer, 2005.

\bibitem{happgreven}
C. Happ and S. Greven.
\newblock Multivariate functional principal component analysis for data observed
on different dimensional domains.
\newblock \emph{Journal of the American Statistical Association},
113(522):649--659, 2018.

\bibitem{blanchardkahn}
O. J. Blanchard and C. M. Kahn.
\newblock The solution of linear difference models under rational expectations.
\newblock \emph{Econometrica}, 48(5):1305--1311, 1980.

\bibitem{kylandprescott}
F. E. Kydland and E. C. Prescott.
\newblock Time to build and aggregate fluctuations.
\newblock \emph{Econometrica}, 50(6):1345--1370, 1982.

\bibitem{scarf}
H. Scarf.
\newblock Some examples of global instability of the competitive equilibrium.
\newblock \emph{International Economic Review}, 1(3):157--172, 1960.

\bibitem{li}
Z. Li, N. Kovachki, K. Azizzadenesheli, B. Liu, K. Bhattacharya,
A. Stuart, and A. Anandkumar.
\newblock Fourier neural operator for parametric partial differential equations.
\newblock In \emph{ICLR}, 2021.

\bibitem{taylorrule}
J. B. Taylor.
\newblock Discretion versus policy rules in practice.
\newblock \emph{Carnegie-Rochester Conference Series on Public Policy},
39:195--214, 1993.

\bibitem{autogen}
Q. Wu et al.
\newblock AutoGen: Enabling next-gen LLM applications via multi-agent
conversation.
\newblock arXiv:2308.08155, 2023.

\bibitem{duchi}
J. Duchi, S. Shalev-Shwartz, Y. Singer, and T. Chandra.
\newblock Efficient projections onto the $\ell_1$-ball for learning in high
dimensions.
\newblock In \emph{ICML}, pages 272--279, 2008.

\end{thebibliography}
\end{document}